\newtheorem{definition}{Definition}[section]
\newtheorem{theorem}{Theorem}[section]
\newtheorem{lemma}{Lemma}[section]
\newtheorem{example}{Example}[section]
\newacronym{iot}{IoT}{Internet of Things}
\newacronym{ml}{ML}{machine learning}
\newacronym{dl}{DL}{Deep Learning}
\newacronym{marl}{MARL}{multi-agent reinforcement learning}
\newacronym{rl}{RL}{reinforcement learning}
\newacronym{decpomdp}{Dec-POMDP}{Decentralized Partially Observable Markov Decision Process }
\newacronym{pomdp}{POMDP}{Partially Observable Markov Decision Process}
\newacronym{uav}{UAV}{Unmanned Aerial Vehicle}
\newacronym{dqn}{DQN}{Deep Q-Network}
\newacronym{dnn}{DNN}{deep neural network}
\newacronym{dial}{DIAL}{Differentiable Inter-Agent Learning}
\newacronym{mdp}{MDP}{Markov decision process}
\newacronym{fov}{FoV}{Field of View}
\newacronym{cnn}{CNN}{Convolutional Neural Network}
\newacronym{nn}{NN}{neural network}
\newacronym{ddql}{DDQL}{Distributed Deep Q-Learning}
\newacronym{pdf}{PDF}{Probability Density Function}
\newacronym{ndpomdp}{ND-POMDP}{Networked Distributed Partially Observable Markov Decision Process}
\newacronym{radam}{RAdam}{Rectified Adam}
\newacronym{cdf}{CDF}{cumulative distribution function}
\newacronym{mpc}{MPC}{Model Predictive Control}
\newacronym{rv}{rv}{Random Variable}
\newacronym{qoe}{QoE}{Quality of Experience}
\newacronym{tlc}{TLC}{Telecommunications}
\newacronym{cml}{CML}{communications for machine learning}
\newacronym{mlc}{MLC}{machine learning for communications}
\newacronym{drl}{DRL}{Deep Reinforcement Learning}
\newacronym{rf}{RF}{Radio Frequency}
\newacronym{urllc}{URLLC}{Ultra-Reliable and Low-Latency Communications}
\newacronym{fl}{FL}{federated learning}
\newacronym{kpi}{KPI}{Key Performance Indicators}
\newacronym{mec}{MEC}{Mobile Edge Computing}
\newacronym{ei}{EI}{Edge Intelligence}
\newacronym{bs}{BS}{base station}
\newacronym{sdn}{SDN}{Software Defined Networking}
\newacronym{mimo}{MIMO}{Multiple-Input Multiple-Output}
\newacronym{gp}{GP}{Gaussian Process}
\newacronym{iiot}{IIoT}{Industrial Internet of Things}
\newacronym{csi}{CSI}{Channel State Information}
\newacronym{sgd}{SGD}{stochastic gradient descent}
\newacronym{iid}{i.i.d.}{independent and identically distributed}
\newacronym{ofdm}{OFDM}{Orthogonal Frequency Division Multiplexing}
\newacronym{los}{LOS}{Line-of-Sight}
\newacronym{nlos}{NLOS}{Non-Line-of-Sight}
\newacronym{snr}{SNR}{Signal to Noise Ratio}
\newacronym{rb}{RB}{Resource Block}
\newacronym{6g}{6G}{sixth generation}
\newacronym{ai}{AI}{artificial intelligence}
\newacronym{sfl}{SFL}{Synchronous Federated Learning}
\newacronym{frfl}{FRFL}{Fixed Rate Federated Learning}
\newacronym{pgm}{PGM}{Probabilistic Graphical Model}
\newacronym{hmm}{HMM}{Hidden Markov Model}
\newacronym{elbo}{ELBO}{Evidence Lower Bound}
\newacronym{pmf}{PMF}{Probability Mass Function}
\newacronym{smab}{SMAB}{Stochastic Multi-Armed Bandit}
\newacronym{mab}{MAB}{multi-armed bandit}
\newacronym{mc}{MC}{Monte Carlo}
\newacronym{is}{IS}{Importance Sampling}
\newacronym{dms}{DMS}{discrete memoryless source}
\newacronym{ucb}{UCB}{upper confidence bound}
\newacronym{ser}{SER}{Symbol Error Rate}
\newacronym{sc}{SC}{Semantic Communications}
\newacronym{voi}{VoI}{Value of Information}
\newacronym{nlp}{NLP}{natural language processing}
\newacronym{ts}{TS}{Thompson Sampling}
\newacronym{cmab}{CMAB}{contextual multi-armed bandit}
\newacronym{rccmab}{RC-CMAB}{rate-constrained \gls{cmab}}
\newacronym{rcmab}{R-CMAB}{remote \gls{cmab}}
\newacronym{ib}{IB}{information bottleneck}
\newacronym{merl}{MERL}{maximum entropy reinforcement learning}
\newacronym{pac}{PAC}{probably approximately correct}
\newacronym{aoi}{AoI}{Age of Information}
\newacronym{aoii}{AoII}{Age of Incorrect Information}
\newacronym{uoi}{UoI}{Urgency of Information}
\newacronym{vqvae}{VQ-VAE}{Vector Quantized Variational Autoencoder}
\newacronym{vae}{VAE}{Variational Autoencoder}
\newacronym{psnr}{PSNR}{Peak Signal to Noise Ratio}
\newacronym{mse}{MSE}{Mean Square Error}
\newacronym{lstm}{LSTM}{Long Short-Term Memory}
\newcommand\blfootnote[1]{%
  \begingroup
  \renewcommand\thefootnote{}\footnote{#1}%
  \addtocounter{footnote}{-1}%
  \endgroup
}
\begin{document}

\title{Semantic Communication of Learnable Concepts} 


%

\author{\IEEEauthorblockN{Francesco Pase$^{\star }$, Szymon Kobus$^{\dagger }$, Deniz G{\"u}nd{\"u}z$^{\dagger }$, Michele Zorzi$^{\star }$\medskip}
	\IEEEauthorblockA{
		$^{\star}$University of Padova, Italy. Email: \texttt{\{pasefrance, zorzi\}@dei.unipd.it}\\
			$^{\dagger}$Imperial College London, London, UK. Email: \texttt{\{szymon.kobus17, d.gunduz\}@imperial.ac.uk}}}

\maketitle
\blfootnote{This work  received  funding  from  the  UKRI (EP/X030806/1)  for the project  AIR (ERC-CoG). For the purpose of open access, the authors have applied a Creative Commons Attribution (CCBY) license to any Author Accepted Manuscript version arising from this submission.}

\begin{abstract}
We consider the problem of communicating a sequence of concepts, i.e., unknown and potentially stochastic maps, which can be observed only through examples, i.e., the mapping rules are unknown. The transmitter applies a learning algorithm to the available examples, and extracts knowledge from the data by optimizing a probability distribution over a set of models, i.e., known functions, which can better describe the observed data, and so potentially the underlying concepts. The transmitter then needs to communicate the learned models to a remote receiver through a rate-limited channel, to allow the receiver to decode the models that can describe the underlying sampled concepts as accurately as possible in their semantic space. After motivating our analysis, we propose the formal problem of communicating concepts, and provide its rate-distortion characterization, pointing out its connection with the concepts of empirical and strong coordination in a network. We also provide a bound for the distortion-rate function.
\end{abstract}

\begin{tikzpicture}[remember picture,overlay]
		\node[anchor=north,yshift=-10pt] at (current page.north) {\parbox{\dimexpr\textwidth-\fboxsep-\fboxrule\relax}{
				\centering\footnotesize This paper has been accepted for presentation at the 2023 IEEE International Symposium on Information Theory. \textcopyright 2022 IEEE.
				}};
\end{tikzpicture}

\section{Introduction and Motivation}
\label{sec:intro}

With the growing number of mobile devices and sensors, massive amounts of data are collected today at the edge of communication networks. On the one hand, this data is the fuel for training large learning models like \glspl{dnn};
on the other hand, these models need to be stored, compressed, and communicated over bandwidth limited channels to the cloud, and protected against security and privacy risks~\cite{xu_privacy:2014}. These issues are increasingly limiting the application of typical centralized training approaches. Various federated/distributed learning paradigms have emerged as potential solutions to mitigate these limitations, which allow the models to be locally trained, and then aggregated in a cloud or edge server without moving local private data~\cite{BrendanMcMahan2017}. The main paradigm shift in distributed learning is to move the models, rather than the data, throughout the network, providing better privacy guarantees and reducing the communication load. However, today even the sizes of such learned models are becoming a concern, as transmitting huge models back and forth for training or inference purposes can easily congest wireless networks, specifically when considering that the edge devices like mobile phones, cars, robots etc., are usually wirelessly connected to the network, and thus have limited bandwidth \cite{Jankowski:ISIT:22}.

Consequently, it is time to investigate, with the proper information-theoretic models and tools, the fundamental limits of communicating models over rate-limited channels, and not just raw data. To this end semantic communications, which concerns with the semantic aspect of the message, maps naturally to the transmission of these learning models~\cite{gunduz:sem:2023}. The communication fidelity of these models can be judged by how close the
behavior of the reconstructed model at the receiver is to the desired one,
rather than by the accuracy of the reconstruction in the parameter space~\cite{Jankowski:ISIT:22}.


\begin{figure}[t!]
    \def\vscale{3.6}
    \def\hscale{5}
    \def\mscale{0.9}
    
	\centering
    \begin{tikzpicture}

    \node (a) at (0,0) {$C_1,\dots,C_n\sim P_C$};
    \node (b) at (0,-\mscale) {$\big\{S_i=\{z_j\}^m_{j=1}\big\}$};
    \node (c) at (0,-\vscale+\mscale) {$\big\{P(h|S_i)=\mathcal{A}(S_i)\big\}^n_{i=1}$};
    \node (d) at (0,-\vscale) {$m=\{1,\dots,2^{nR}\}$};
    \node (e) at (\hscale,-\vscale) {$\{Q_i(h)\}^n_{i=1}$};
    \node (f) at (\hscale,-\mscale/2) {
        \begin{tabular}{c} 
        $p_h(Y|X) h\sim Q_1(h),X\sim C_1$ \\ $\dots$ \\ 
        $p_h(Y|X) h\sim Q_n(h),X\sim C_n$
        \end{tabular}
    };
    \node[font=\fontsize{11}{0}\selectfont] (alice) at (0,-\vscale-\mscale/2) {\textbf{Alice}};
    \node[left of=alice] {\includegraphics[width=13px]{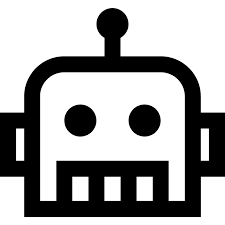}};
    \node[font=\fontsize{11}{0}\selectfont] (bob) at (\hscale,-\vscale-\mscale/2) {\textbf{Bob}};
    \node[right of=bob] {\includegraphics[width=13px]{img/robot.png}};


    \draw[->] (a) -- (b); 
    \draw[->] (b) -- (c);
    \draw[->] (c) -- (d);
    \draw[dashed,->] (d) -- node[above] {Noisy} (e);
    \draw[dashed,->] (d) -- node[below] {Channel} (e);
    \draw[->] (e) -- (f);
    \draw[ultra thick] (-1,-\vscale/2) -- (\hscale+1,-\vscale/2);
    \node[anchor=south, font=\fontsize{12}{0}\selectfont] at (\hscale/2,-\vscale/2+0.1) {Semantic Level};
    \node[anchor=north, font=\fontsize{12}{0}\selectfont] at (\hscale/2,-\vscale/2-0.1) {Technical Level};
    \end{tikzpicture}
    \vspace{-0.75cm}
    \caption{The problem of communicating concepts.}
    \vspace{-0.3cm}
	\label{fig:semscheme}
\end{figure}

\section{Related Work}
\label{sec:related}

The goal of this paper is to introduce the problem of communicating concepts, which is translated to that of conveying functions that better approximate them by learning their semantic aspects from data. The closest reference to this work is~\cite{havasi_2018}, in which the goal is to compress neural networks by applying bits back coding~\cite{frey:1997} to represent probability distributions over models with the minimum number of bits. The problem studied in~\cite{havasi_2018} is the single-shot version of our problem, focusing on the design of a practical coding scheme, which is called $\mathtt{MIRACLE}$, to efficiently compress neural networks. In \cite{Ofir:2016}, the authors study the connections between compressibility and learnability in the context of \gls{pac} learning, and show that the two concepts are equivalent when zero/one loss is considered, but not in the case of general loss functions. Another line of research investigates the connections between the generalization capabilities of learning algorithms, and the mutual information between the data and the model~\cite{xu:2017, Bassily:LittleIT:2018, Steinke:gen:2020}. The logic behind these results is to provide a bound on the generalization gap, i.e., the difference between the expected error and the training one, given some information-theoretic properties of the learning algorithm. However, if the environment imposes a constraint on such quantities, e.g., mutual information between the input and output of the learning rule, for example by introducing a rate-limited communication channel between the data and the final model, this influences not only the generalization gap, but also the training error itself (the output of the learning rule is constrained by the environment now), and so it is not clear how the gap between the best achievable test error changes as a function of the mutual information. This is the scenario studied in this work, where the constraint on the mutual information is not a property of the learning rule, but rather a physical limit imposed by the system. In~\cite{pase:jsait:2022}, a similar study is performed on the specific case of contextual multi-armed bandits, where the fundamental quantity is the mutual information $I(S; A)$ between the system states and the action taken by the agents, which is a property of the specific policy adopted. This work generalizes that idea to the supervised learning framework, and considers the effect of the communication rate $R$ on the final performance. It is also interesting to highlight the connections between this work and the study in~\cite{achille:ic:2021}, where the authors quantify the complexity of a learning algorithm output $Q$ with its Kullback–Leibler divergence from a prior model distribution $P$, which, in our system model, represents the minimum achievable rate to convey $Q$, when $P$ is set as the prior distribution. 

To conclude, this work is partially built on top of the results in~\cite{cuff:2010, CommDistribution, harsa:2010, functional_repr, pmlr-v162-theis22a}, which generalize the concept of rate-distortion theory~\cite{cover:IT} for standard data communication to probability distributions, where the fidelity requirement at the receiver is not to exactly reconstruct the input data, but rather to generate samples according to some input distribution. Indeed, here the semantic aspect of communication is captured by the fact that there is no need to convey the exact dataset sampled by the transmitter, but rather to represent with high fidelity the belief on the underlying concepts acquired after observing the data, which is the post-data probability distribution over the class of feasible models. 

\section{System Model}
\label{sec:system}

Let $\mathcal{E}$ denote the environment, i.e., the source, that generates a sequence of $n$ concepts, e.g., tasks, $\{ c_i \}_{i=1}^n$, $c_i \in \mathcal{C}$, sampled with probability $P_C$ in an \gls{iid} fashion. While $P_C$ is known by both Alice and Bob, neither of them can observe the sampled concepts directly. Alice has access to a sequence of $m$ samples $\{z_{i, j}\}_{j=1}^m$, where $z_{i, j} = (x_{i, j}, y_{i, j}) \in \mathcal{Z}$, sampled according to each of the concept distributions $p_{c_i}(Y|X) p_{c_i}(X), \; \forall i=1, \dots, n$. Alice and Bob agree on a hypothesis class, i.e., the model class $\mathcal{H}$, and on a pre-data coding probability distribution $P_h, \; \forall h \in \mathcal{H}$. We call the sequence $\{z_{i,j}\}_{j=1}^m$ of samples the dataset $s_i$. Alice applies a learning algorithm $\mathcal{A}: \mathcal{Z}^m \rightarrow \Phi(\mathcal{H})$ on $s_i$, which is a possibly stochastic function mapping a dataset to a probability distribution $Q_{h|s_i} = \mathcal{A}(s_i)$ over the set of models $\mathcal{H}$, and so, over the subset of all possible probability mappings $h: \mathcal{X} \rightarrow \Phi(\mathcal{Y})$, representing Alice's concept belief. With $\Phi(\mathcal{X})$ we denote the set of all possible probability distributions over the set $\mathcal{X}$. Consequently, the models are functions used by Alice and Bob to represent (or, more precisely, to approximate) the concepts' relation among data. To measure how well a model $h$ approximates a concept $c$, a per-sample loss $\ell_c(h, z) : \mathcal{H} \times \mathcal{Z} \rightarrow [0, 1]$ is defined, which compares the discrepancy between $p_c(y|x)$ and $h(x)$. 
In this work, we assume a bounded loss within $[0, 1]$ for the sake of clarity of exposition. Then, $\ell_c(Q, z) : \Phi(\mathcal{H}) \times \mathcal{Z} \rightarrow [0, 1]$ is the performance of the model belief $Q$, which is defined as $\ell_c(Q, z) = \mathbb{E}_{h \sim Q} \left[ \ell_c(h, z) \right]$. Upon observing the data, Alice can compute her empirical performance by using the \emph{empirical loss} on her dataset $S$ as 
\begin{align}
\label{eq:emp_loss}
    \tilde{\mathcal{L}}_C(Q, S) = \frac{1}{|S|} \sum_{z \in S} \ell_C(Q, z),
\end{align}
where $Q = Q_{H | S} = \mathcal{A}(S)$ is the post-data distribution inferred by Alice, given the data. We assume that, for any sequences of datasets $s^n,s^{\prime n}$, Alice's distribution can be factorized as $Q^n_{h^n|s^n}=\prod_{i=1}^n Q_{h_i|s_i}$ such that $s_i = s^\prime_j \Rightarrow Q_{h_i|s_i}=Q_{h_j|s^\prime_j}$. However, to assess how well the belief $Q$ represents the concept $C$, in machine learning we are usually interested in the \emph{true loss} 
\begin{align}
\label{eq:true_loss}
    \mathcal{L}_C(Q) = \mathbb{E}_{Z \sim C} \left[ \ell_C(Q, Z) \right],
\end{align} 
i.e., the expected performance on a new unseen sample.
Given the realization of the datasets $\{s_i\}_{i=1}^n$, the problem for Alice is then to convey a message to Bob through a constrained communication channel, which limits the maximum number of bits she can convey per model, so that Bob can use the received information to reconstruct models $\{ \hat{h}_i\}_{i=1}^n$ that can approximate the concepts $\{c_i\}_{i=1}^n$ by minimizing the loss on random samples $\{z_i\}_{i=1}^n$ distributed according to the sequence of concepts, i.e., the true loss in Equation~(\ref{eq:true_loss}). We can observe that the task for Bob is not to exactly reconstruct the sequence $\{ h_i\}_{i=1}^n$ sampled by Alice, but rather to obtain samples $\{ \hat{h}_i\}_{i=1}^n$ whose probability distributions are close to the target ones, i.e., $\{ Q_{h_i|s_i}\}_{i=1}^n$.

\textbf{Remark.} We now briefly discuss why we are interested in learning rules $\mathcal{A}(S)$ that output model distributions, rather than single-point solutions: 
\begin{itemize}
    \item  First of all, the case in which Alice finds a point-wise estimate of the best model $h^*$ is included as a special case $Q_{h|S} = \delta_{h^*}$.
    \item Alice may want to express her uncertainty around the best choice $h^*$, which may be intrinsic in the learning algorithm $\mathcal{A}$, through the distribution $Q_{h|S}$.
    \item Usually, optimization algorithms used to train \glspl{dnn}, like \gls{sgd}, are stochastic algorithms.
    \item When $\mathcal{H}$ is the set of all \glspl{dnn} $h_{\omega}$ with a specific architecture parameterized by the parameter vector $\omega$, there exist many vectors $\omega$ performing in the same way. Moreover, small perturbations to the parameters usually does not reduce the final performance. This means that it is not required for Bob to reconstruct the exact value of the network parameters, but rather a nearby or an equivalent solution, and this variability is represented by $Q_{h_{\omega}|S}$. More importantly, $Q_{h_{\omega}|S}$ can be exploited to reduce the rate needed to convey the models, thus saving network resources~\cite{havasi_2018}. This is the semantic aspect of communication captured by our framework, as the meaning of a concept $c$, i.e., the real unknown mapping, is conveyed through the model belief $Q_{h|S}$, whose loss expressed in Equation~(\ref{eq:true_loss}) quantifies its fidelity with respect to the real concept $c$. 
 \end{itemize}

\section{The Rate-Distortion Characterization }
 \label{sec:rate_distortion}

In this section, we first characterize the limit of the problem when $n=1$, i.e., one-shot concept communication, and then  generalize the problem to the $n$-sequence formulation. For the latter, two kinds of performance metrics are defined: the first one provides average performance guarantees, while the second one ensures the same performance guarantee for each sample $\hat{h}$. We will show that the minimum achievable communication rate that can guarantee a certain distortion level is the same in both cases, as long as sufficient common randomness between Alice and Bob is available. 

\subsection{Single-Shot Problem}
\label{sub:single_shot}

The single-shot version of the problem has been studied in \cite{havasi_2018}, where the authors propose $\mathtt{MIRACLE}$, a neural network compression framework based on bits back coding~\cite{frey:1997}, providing an efficient single-shot model compression scheme showing empirically that, with enough common randomness, it is possible to convey the model with an average of $K \simeq D_{\text{KL}}(Q || P)$ bits with very good performance, where $P$ is the pre-data coding model distribution, and $Q$ is the optimized post-data distribution, providing a belief over well-performing neural networks, or, equivalently, over a set of parameter vectors, as explained in Section~\ref{sec:system}. 
However, from Lemma~$1.5$ in~\cite{harsa:2010}, the average number of bits $\mathbb{E}_{C,S}\left[ K \right]$ needed to exactly code $Q$ with $P$, when sufficient common randomness is available, can be bounded by 
\begin{align}
    \mathcal{R}  \leq \mathbb{E}_{C,S}\left[ K \right] \leq \mathcal{R} + 
    &2 \log\left( \mathcal{R} + 1\right) + \mathcal{O}(1),
\end{align}
where $\mathcal{R}=\mathbb{E}_{C,S} \left[ D_{\text{KL}}(Q || P)\right]$,
while using exactly $\mathcal{R}$ bits may lead to samples which are distributed according to $\Tilde{Q}$, slightly different from the target $Q$~\cite{harsa:2010}.
More recent results \cite{functional_repr} allow to find even stronger guarantees (Corollary 3.4 in \cite{pmlr-v162-theis22a}) for this relationship: 
\begin{align}
    \mathbb{E}_{C,S}\left[ K \right] \leq  \mathcal{R} +
    \log \left( \mathcal{R} + 1 \right) + 4.
\end{align}
\subsection{$n$-Length Formulation}
\label{sub:n_formulation}

We now study the problem depicted in Figure~\ref{fig:semscheme}, when we let Alice code a sequence of $n$ concept realizations, i.e., datasets, and study the information-theoretic limit of the system as ${n\rightarrow \infty}$. Specifically, we are interested in the trade-off between the rate $R$, which is defined as the average number of bits consumed per model by Alice to convey the concept process to Bob, and the performance, which is the true loss that can be obtained by Bob (see Eq.~(\ref{eq:true_loss})). We start by defining the proper quantities involved.

\begin{definition}[Rate-Distortion Coding Scheme]
\label{def:coding_scheme}
A $(2^{nR}, n)$ coding scheme consists of an alphabet $\mathcal{X}$, a reconstruction alphabet $\mathcal{\hat{X}}$, an encoding function $f_n :\mathcal{X}^n \rightarrow \{ 1, 2, \ldots, 2^{nR} \}$, a decoding function $g_n: \{ 1, 2, \ldots, 2^{nR}\} \rightarrow \hat{\mathcal{X}}^n$, and a distortion measure $d: \mathcal{X}^n \times \mathcal{\hat{X}}^n \rightarrow \mathbb{R}^+$, comparing the fidelity between $x^n$ and $\hat{x}^n$. Specifically, we are interested in the expected distortion $\mathbb{E}\left[ d(x^n, \hat{x}^n)\right] = \sum_{x^n\in\mathcal{X}^n} p(x^n) d(x^n, g_n(f_n(x^n)))$.
\end{definition} 

\begin{definition}[Rate-Distortion] A rate-distortion pair $(R, \epsilon)$ is said to be achievable for a source $p(x)$ and a distortion measure $d$, if there exists a sequence of $(2^{nR}, n)$ rate-distortion coding schemes with
\begin{align}
    \lim_{n \rightarrow \infty} \mathbb{E}\left[ d(x^n, g_n(f_n(x^n)))\right] < \epsilon.
\end{align}
\end{definition}

However, as specified in the remark in Section~\ref{sec:system}, we are interested in conveying beliefs $Q \in \Phi(\mathcal{H})$, i.e., samples drawn according to the probability $Q$, obtained from the datasets $s^n \in \mathcal{S}^n$, where $\mathcal{S}^n = \mathcal{Z}^m$. In our case, the coding function $f_n: \mathcal{S}^n \rightarrow \{ 1, 2, \ldots, 2^{nR} \}$ maps the sequence $s^n = \{s_i\}_{i=1}^n$ to a message $f_n(s^n)$ from which Bob can obtain the models $\hat{h}^n = g_n(f_n(s^n))$. Our distortion then considers the difference between the $\hat{h}^n$'s distribution $\hat{Q}^n$, and the one achievable by Alice $Q^n = \{\mathcal{A}(s_i)\}_{i=1}^n$ by comparing their samples $\hat{h}^n$ and $h^n$. Consequently, we define with $\hat{Q}_{S^n, \hat{H}^n}$ (or simply $\hat{Q}^n$) the joint distribution between the datasets and models induced by a $(2^{nR}, n)$ coding scheme, whose marginals are $\hat{Q}_{S_i, \hat{H}_i}$ for $i = 1, \dots, n$.

\begin{definition}[Concept Distortion]
\label{def:single_concetp_dist} For the problem of communicating concepts, we define the following distortion on the model beliefs $Q$ and $\hat{Q}$:
\begin{align}
\label{eq:single_concept_dist}
    d_{sem}(Q, \hat{Q}) = \mathbb{E}_{C, S} \left[ \mathcal{L}_{C}(\hat{Q}) - \mathcal{L}_{C}(Q)  \right] .
\end{align}
\end{definition}

The rationale behind this definition is that $Q$, which is the target distribution at the transmitter, is optimized on a given dataset $S$ without any constraint. Therefore, it is reasonable to assume $\mathcal{L}_{C}(\hat{Q}) - \mathcal{L}_{C}(Q)$ to be always non-negative. We notice that $d_{sem}$ quantifies the gap between the concept reconstruction at the receiver and the one at the transmitter, which is a semantic measure on the unknown true loss $\mathcal{L}(Q)$.

\begin{definition}[$n$-Sequence Concept Distortion]
\label{def:concetp_dist} For the problem of communicating concepts, we define the following distortion on the sequence of the model beliefs $Q^n$ and $\hat{Q}^n$:
\begin{align}
\label{eq:concept_avg_dist}
    d_{avg}(Q^n, \hat{Q}^n) = \frac{1}{n} \sum_{i=1}^n d_{sem}(Q_i, \hat{Q}_i),
\end{align}
and
\begin{align}
\label{eq:concept_strong_dist}
    d_{max}(Q^n, \hat{Q}^n) =\max_{i=\{1, \ldots, n\}} d_{sem}(Q_i, \hat{Q}_i).
\end{align}
\end{definition}
In practice, $d_{avg}$ defines a constraint on the true loss achievable by Bob averaged over the performance of the sequence $\hat{h}^n \sim \hat{Q}^n$, whereas $d_{max}$ imposes a constraint on the loss of every marginal 
$\hat{h}_i \sim \hat{Q}^n_i, \; i \in \{1, \ldots, n\}$.

\textbf{Remark.} The distortion, easily defined for one-shot communications, can be generalized to a sequence of $n$ concepts in multiple ways.
Specifically, if one is interested in a system-level loss, then satisfying the constraint on $d_{avg}$ could be enough. However, to provide a per-model guarantee on the performance, then $d_{max}$ is the distortion to use.

\begin{definition}[Rate-Distortion Region]
The rate-distortion region for a source is the closure of the set of achievable rate-distortion pairs $(R, \epsilon)$.
\end{definition}

\begin{definition}[Rate-Distortion Function] The rate-distortion function $R(\epsilon)$ is the infimum of rates $R$ such that $(R, \epsilon)$ is in the rate-distortion region.
\end{definition}    

\subsection{Average Distortion $d_{avg}$}
\label{sub:avg_dist}

We first analyze the problem with the average distortion $d_{avg}$, as defined in Equation~(\ref{eq:concept_avg_dist}).

\begin{theorem}[Rate-Distortion Theorem for $d_{avg}$]
\label{cor:oracle_rate_distortion}
For the problem of communicating concepts with distortion $d_{avg}$, the rate-distortion function satisfies
\begin{align}
\label{eq:rate_dist_avg}
    R(\epsilon) = \min_{\substack{\tilde{Q}_{H|S}: \\ d_{sem}(Q, \tilde{Q}) \leq \epsilon }} I(S;H),
\end{align}
where $I(S;H)$ is the mutual information between the data $S$ and the model $H$~\cite{cover:IT}. 
\end{theorem}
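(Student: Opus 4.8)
The plan is to recognize that, despite its ``semantic'' flavor, $d_{sem}$ is an \emph{affine} functional of the reconstruction belief, so the whole problem collapses to a classical single-letter lossy source coding problem with source $S$ and reconstruction $H$. Concretely, use linearity of the true loss in the belief, $\mathcal{L}_C(\hat Q)=\mathbb{E}_{\hat H\sim\hat Q}[\mathcal{L}_C(\hat H)]$, and average over the joint law $P_{C,S}$ of concept and dataset; since Bob's reconstruction depends on the data only through the message, the chain $C\to S\to\hat H$ holds and
\begin{align}
    d_{sem}(Q,\hat Q)=\mathbb{E}_{S,\hat H}\big[\,d(S,\hat H)\,\big],\qquad d(s,\hat h):=\mathbb{E}_{C|S=s}\!\big[\mathcal{L}_C(\hat h)-\mathcal{L}_C(Q_{H|s})\big].
\end{align}
Here $d(s,\hat h)$ is a fixed per-letter distortion between a dataset and a model, bounded in $[-1,1]$ because $\ell_C\in[0,1]$ (the subtracted term depends on $s$ only, through the fixed rule $\mathcal{A}$). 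Because the concepts are \gls{iid} and $\mathcal{A}$ factorizes as assumed in Section~\ref{sec:system}, $S^n$ is \gls{iid} and $d_{avg}(Q^n,\hat Q^n)=\frac1n\sum_i\mathbb{E}[d(S_i,\hat H_i)]$; moreover the feasibility set $\{\tilde Q_{H|S}:d_{sem}(Q,\tilde Q)\le\epsilon\}$ is exactly the usual constraint set $\{\tilde Q_{H|S}:\mathbb{E}[d(S,\tilde H)]\le\epsilon\}$. Thus the theorem is Shannon's lossy source coding theorem for $P_S$ with per-letter distortion $d$, and it remains to instantiate the two classical halves.

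For the converse, take any $(2^{nR},n)$ scheme achieving the limiting distortion, with message $M$: then $nR\ge H(M)\ge I(S^n;M)\ge I(S^n;\hat H^n)\ge\sum_{i=1}^n I(S_i;\hat H_i)$, the last step by the \gls{iid} source (conditioning reduces entropy) and the fact that $\hat H^n$ is a function of $M$. Each term is at least $R^{\mathrm{info}}(\mathbb{E}[d(S_i,\hat H_i)])$, where $R^{\mathrm{info}}(D):=\min_{\tilde Q_{H|S}:\,\mathbb{E}[d]\le D}I(S;\tilde H)$; since $R^{\mathrm{info}}$ is convex and non-increasing in $D$ (standard for a bounded real-valued distortion), Jensen gives $R\ge\frac1n\sum_i R^{\mathrm{info}}(D_i)\ge R^{\mathrm{info}}\big(\tfrac1n\sum_i D_i\big)=R^{\mathrm{info}}(d_{avg})\ge R^{\mathrm{info}}(\epsilon)$, using continuity of $R^{\mathrm{info}}$ in the limit $n\to\infty$.

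For achievability, fix $\delta>0$, pick a test channel $\tilde Q_{H|S}$ with $\mathbb{E}[d(S,\tilde H)]\le\epsilon-\delta$ and $I(S;\tilde H)\le R^{\mathrm{info}}(\epsilon)+\delta$, draw a codebook of $2^{n(I(S;\tilde H)+\delta)}$ sequences $\hat H^n(w)\sim\prod_i\tilde Q_H$, let the encoder send an index whose codeword is jointly $\delta$-typical with $s^n$, and let the decoder output that codeword. The covering lemma makes encoding succeed with probability $\to 1$; on that event the empirical distortion concentrates around $\mathbb{E}[d(S,\tilde H)]\le\epsilon-\delta$, while the vanishing failure event contributes at most $1$ by boundedness of $d$, so $\limsup_n d_{avg}\le\epsilon-\tfrac{\delta}{2}<\epsilon$. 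A \emph{deterministic} decoder suffices here: for $d_{avg}$ the objective depends on $\hat Q^n$ only through the pairwise marginals and linearly, so common randomness is not needed (it will be for $d_{max}$). Letting $\delta\to0$ yields schemes at rate $R^{\mathrm{info}}(\epsilon)$, matching the converse.

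The step I expect to be the real work is the reduction in the first paragraph, not the coding arguments: one must verify that every ``semantic'' ingredient — the unobservable concept $C$, the fresh test sample $Z$, the fact that Alice emits a belief rather than a point estimate — enters only through the fixed joint $P_{C,S}$ and the \emph{affine} dependence of $\mathcal{L}_C$ on the belief, so that the induced joint $\hat Q_{S,\hat H}$ and its expected distortion form a sufficient statistic. One should also handle regularity when $\mathcal{H}$ is not finite (measurability of $d$, existence of a minimizing test channel, a covering lemma valid for general reconstruction alphabets), and note that only boundedness of $d(s,\hat h)$ — not its sign — is used, the informal non-negativity remark notwithstanding. The factorization hypothesis on $\mathcal{A}$ and the \gls{iid} assumption on $P_C$ are precisely what legitimize the single-letterization in the converse and the product codebook in the achievability.
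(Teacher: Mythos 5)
Your proof is correct, and it reaches the same result by a more self-contained route than the paper does. Both arguments rest on the same key observation -- that $d_{sem}$ is affine in the reconstruction law, so $d_{avg}$ depends on the coding scheme only through the induced joint $\hat{Q}_{S,\hat H}$ -- but they cash it in differently. The paper rewrites $d_{avg}(Q^n,\hat Q^n)$ as a functional of the \emph{empirical} joint type $\hat{Q}_{s^n\hat h^n}$, recognizes the constraint as one on empirical coordination, and then simply cites Theorem~1 of~\cite{CommDistribution} (Kramer--Savari style ``communicating distributions'') to obtain the single-letter characterization. You instead package the same affinity into a fixed per-letter distortion $d(s,\hat h)=\mathbb{E}_{C|S=s}[\mathcal{L}_C(\hat h)-\mathcal{L}_C(Q_{H|s})]$ on the alphabet $\mathcal{S}\times\mathcal{H}$, observe that $S^n$ is \gls{iid} under the stated factorization of $\mathcal{A}$, and then prove the converse (single-letterization plus convexity of the information rate-distortion function) and achievability (random product codebook plus covering lemma) directly from Shannon's classical theorem. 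Your version does not need the coordination machinery or the external citation, and your remark that only boundedness of $d$ matters -- not non-negativity -- is the right level of care, since a bounded distortion can always be shifted. Your observation that a deterministic code suffices for $d_{avg}$ (in contrast to $d_{max}$) is also consistent with the paper's definitions. One small point worth flagging: your converse implicitly needs lower semicontinuity / existence of a minimizer for $R^{\mathrm{info}}$ when $\mathcal H$ or $\mathcal S$ is not finite; the paper's reliance on~\cite{CommDistribution} sidesteps this by inheriting whatever regularity that reference assumes.
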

\begin{proof}
    See Appendix~\ref{app:kramer_translate}
\end{proof}

\subsection{Maximum Distortion $d_{max}$}
\label{sub:max_dist}

In this case the distortion function implies a constraint on the performance of each symbol, i.e., model realization. First of all, we just provide a simple scheme in which
$\lim_{n \rightarrow \infty} d_{avg}(Q^n, \hat{Q}^n) = 0$ does not imply 
$\lim_{n \rightarrow \infty} d_{max}(Q^n, \hat{Q}^n) = 0$, meaning that in general a code that achieves $0$ distortion on average, may not achieve $0$ distortion model-wise, i.e., we cannot guarantee a single-model performance.





\begin{example}
\label{ex:weak_dist}
Let $\mathcal{H}=\{h_0, h_1\}$,
and performance
$\ell_c(h_0,z)=0, \ell_c(h_1,z)=1, ~ \forall z\in \mathcal{Z},$ $\forall\: c\in \mathcal{C}$.
Let $\forall\: c^n \in \mathcal{C}^n$, Alice's distribution $Q_i(h_j|S)=\frac{1}{2}$, where $j\in \{0,1\}$, and $Q^n=\prod_{i=1}^n {Q_i}$,
while Bob's distribution is deterministic $\hat{Q}_{2i}(h_0|S)=1, \hat{Q}_{2i+1}(h_1|S)=1$.
Then
\begin{align*}
     d_{avg}&(Q^n, \hat{Q}^n) = \\ 
     &\begin{cases}
    \frac{1}{n} \sum_{i=1}^\frac{n}{2} \left( 1-\frac{1}{2}+0-\frac{1}{2} \right) = 0,& \text{if } n \text{ is even} \\
    \frac{1}{n}[\sum_{i=1}^\frac{n-1}{2} \left( 1-\frac{1}{2} +0-\frac{1}{2} \right) + 1-\frac{1}{2}]=\frac{1}{2n},              & \text{otherwise}
\end{cases}&
\end{align*}
Thus, {\centering $\lim_{n \rightarrow \infty} d_{avg}(Q^n, \hat{Q}^n) = 0$}

\noindent but
\begin{align*}
      d_{max}(Q^n, \hat{Q}^n) = \max \left\{1-\frac{1}{2}, 0-\frac{1}{2}\right\} = \frac{1}{2} \\
      \implies \lim_{n \rightarrow \infty} d_{max}(Q^n, \hat{Q}^n) \neq 0.
\end{align*}

\end{example}

The question now is what is needed to ensure the same distortion $\epsilon$ to the single-model performance, i.e., to ensure $d_{max} < \epsilon$, which is of particular interest in the semantic communication of concepts like the one considered here. 
We now distinguish between the two ways in which $\hat{Q}^n$ can converge to a target $Q^n$ -- empirical and strong.
These two notions, introduced later, map precisely onto the difference between convergence of $d_{avg}$ and $d_{max}$.
The focus is changed from determining which distortion is achievable to what joint distributions of $H$ and $S$ are feasible under some rate constraint.


First, we extend the encoding and decoding functions $f_n$ and $g_n$ in Definition~\ref{def:coding_scheme} to accept an additional common input $\omega \in \Omega$, which is generated by a source of common randomness $p(\omega)$. We define a $(2^{nR}, 2^{nR_0}, n)$ stochastic code consisting of functions $f_n :\mathcal{S}^n \times \{1, \ldots, 2^{nR_0}\} \rightarrow \{ 1, 2, \ldots, 2^{nR} \}$ and $g_n : \{1, 2, \ldots, 2^{nR} \} \times \{1, \ldots, 2^{nR_0}\} \rightarrow \hat{\mathcal{H}}^n$, which consumes on average $R_0$ bits of common randomness per sample. 

\begin{definition}
    A desired distribution $Q_{S, H}$ is achievable for empirical coordination with rate pair $(R, R_0)$ if there exists a sequence of $(2^{nR}, 2^{nR_0}, n)$ codes and a choice of common randomness distribution $p(\omega)$ such that
    \begin{align}
        \text{TV}\left(\hat{Q}_{s^n\hat{h}^n}, Q_{S, H} \right) \rightarrow 0,
    \end{align}
where $\text{TV}(\hat{Q}, Q )$ indicates the total variation between distributions $\hat{Q}$ and $Q$, $\hat{Q}_{s^n\hat{h}^n}(s, \hat{h}) = \frac{1}{n} \sum_{i=1}^n \mathbf{1}_{(s_i, \hat{h}_i) = (s, \hat{h})}$, and $\mathbf{1}_{\mathcal{I}} = 1$ if the condition $\mathcal{I}$ is true, and $0$ otherwise.
\end{definition}
In other words, the empirical coordination property requires that the joint empirical distribution of the pairs $(s_i, g_n(f_n(s^n))_i)$ converges, in total variation, to the desired distribution. Notice how, in Example \ref{ex:weak_dist}, the joint distributions $Q^n$ and $\hat{Q}^n$ converge in their empirical distributions, while differing letter-wise.

As already pointed out in~\cite{cuff:2010}, introducing common randomness does not improve the performance of empirical coordination schemes, meaning that any distribution achievable for empirical coordination by a $(2^{nR}, 2^{nR_0}, n)$ coding scheme is also achievable with $R_0 = 0$. Moreover, empirical coordination schemes can be used to construct rate-distortion schemes for $d_{avg}$~\cite{cuff:2010}. However, as observed in Example~\ref{ex:weak_dist}, they do not equate to the same per-symbol performance requirements.

\begin{definition}
    A desired distribution $Q_{D, H}$ is achievable for strong coordination with rate pair $(R, R_0)$ if there exists a sequence of $(2^{nR}, 2^{nR_0}, n)$ coordination codes and a choice of common randomness distribution $p(\omega)$ such that
    \begin{align}
        \text{TV}\left(\hat{Q}_{s^n\hat{h}^n}, \prod_{i=1}^n Q_{s_i, h_i} \right) \rightarrow 0,
    \end{align}
where $\hat{Q}_{s^n\hat{h}^n}$ is the joint distribution induced by the stochastic coding scheme.
\end{definition}

\begin{lemma}[$d_{max}$ Achievability]
\label{lem:achievable_max}
    With sufficient common randomness, the rate-distortion region $(R, \epsilon)$ for $d_{max}$ is the same as the one for $d_{avg}$.
\end{lemma}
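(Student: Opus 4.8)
The plan is to establish the equality of the two rate-distortion regions by proving the two inclusions separately, the hard one going through strong coordination. One inclusion is immediate: for every stochastic code and every $n$ one has $d_{max}(Q^n,\hat Q^n)\ge d_{avg}(Q^n,\hat Q^n)$, since a maximum dominates an average; hence any pair $(R,\epsilon)$ achievable under $d_{max}$ is a fortiori achievable under $d_{avg}$, so the $d_{max}$ region is contained in the $d_{avg}$ region. The substance of the lemma is the reverse inclusion, and the idea is that the rate-distortion function for $d_{avg}$ in Theorem~\ref{cor:oracle_rate_distortion} is realized by an \emph{empirical} coordination code (exactly as in Example~\ref{ex:weak_dist}, where $\hat Q^n$ matches $Q^n$ only in empirical distribution and $d_{avg}\to 0$ while $d_{max}\not\to 0$), whereas a per-model guarantee forces $\hat Q^n$ to track the target \emph{letter-wise}, i.e., requires a \emph{strong} coordination code; the two have the same minimal communication rate once sufficient common randomness is allowed.

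Concretely, I would fix $(R,\epsilon)$ in the $d_{avg}$ rate-distortion region. By Theorem~\ref{cor:oracle_rate_distortion} there is a single-letter test channel $\tilde Q_{H|S}$ with $d_{sem}(Q,\tilde Q)\le\epsilon$ and, under $Q_{S,H}:=P_S\tilde Q_{H|S}$, $I(S;H)\le R(\epsilon)+\delta$ for any prescribed $\delta>0$ (and $\le R$ whenever the minimum is attained). I would then invoke the strong coordination result of~\cite{cuff:2010} for the point-to-point setting in which Alice observes the i.i.d.\ source $S^n$ and Bob outputs $\hat H^n$: with unlimited common randomness, any communication rate $R'>I(S;H)$ suffices for strong coordination to $Q_{S,H}$, i.e., there is a sequence of $(2^{nR'},2^{nR_0},n)$ stochastic codes with $\mathrm{TV}\big(\hat Q_{s^n\hat h^n},\prod_{i=1}^n Q_{s_i,h_i}\big)\to 0$. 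Choosing $R'\in\big(I(S;H),R+\delta\big]$ fixes the communication budget, and the common-randomness rate $R_0$ is allowed to grow with $n$, matching the ``sufficient common randomness'' hypothesis.

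It then remains to convert strong coordination into a bound on $d_{max}$. Marginalization cannot increase total variation, so $\max_{i\le n}\mathrm{TV}\big(\hat Q_{s_i\hat h_i},Q_{s_i,h_i}\big)\le \mathrm{TV}\big(\hat Q_{s^n\hat h^n},\prod_j Q_{s_j,h_j}\big)\to 0$; thus \emph{all} per-letter joints approach the target \emph{uniformly} in $i$. Because $C^n$ is i.i.d.\ and the sample generation and the common randomness are independent across coordinates, $C_i-S_i-\hat H_i$ is a Markov chain, so $d_{sem}(Q_i,\hat Q_i)$ depends on the code only through $\hat Q_{s_i\hat h_i}$, and it is an affine functional of that joint with coefficients in $[0,1]$ (using $\ell_c(h,z)\in[0,1]$); hence $\big|d_{sem}(Q_i,\hat Q_i)-d_{sem}(Q,\tilde Q)\big|\le 2\,\mathrm{TV}\big(\hat Q_{s_i\hat h_i},Q_{s_i,h_i}\big)$. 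Taking the maximum over $i$ gives $d_{max}(Q^n,\hat Q^n)\to d_{sem}(Q,\tilde Q)\le\epsilon$, so $(R+\delta,\epsilon+\delta)$ is achievable under $d_{max}$ for every $\delta>0$; letting $\delta\downarrow 0$ places $(R,\epsilon)$ in the closed $d_{max}$ region, completing the reverse inclusion.

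The step I expect to be the crux is this last translation: verifying the Markov structure $C_i-S_i-\hat H_i$ so that the fresh test sample $Z$ and the concept $C_i$ couple to $\hat H_i$ only through $S_i$, and then controlling $d_{sem}$ by total variation \emph{uniformly} across the block of length $n$ as $n\to\infty$ --- together with quoting the correct unlimited-common-randomness form of the strong coordination rate region from~\cite{cuff:2010}. The easy inclusion and the appeal to Theorem~\ref{cor:oracle_rate_distortion} are routine.
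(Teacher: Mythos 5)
Your proof is correct and takes essentially the same route as the paper's: identify the $d_{avg}$-optimal single-letter joint $Q_{S,H}^*$ from Theorem~\ref{cor:oracle_rate_distortion}, invoke the strong-coordination achievability of~\cite{cuff:2010} (same communication rate $I(S;H)$ given sufficient common randomness), then marginalize the total-variation guarantee to each coordinate to conclude $d_{max}\le\epsilon$. Your write-up is actually a bit more careful than the paper's in two places the paper elides --- you state the trivial inclusion $d_{max}\ge d_{avg}$ explicitly, and you spell out the uniform-in-$i$ passage from per-letter TV convergence to a bound on $d_{sem}(Q_i,\hat Q_i)$ via the Markov chain $C_i-S_i-\hat H_i$ and the affine dependence of $d_{sem}$ on the joint law --- but the underlying argument is the same.
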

\begin{proof}
    See Appendix~\ref{app:achievable_max}.
\end{proof}
Given a constraint on the distortion $\epsilon$ for $d_{avg}$ achievable with minimum rate of $R_{\epsilon}$, if Alice and Bob can use common randomness, then it is possible to satisfy, at the same rate $R_{\epsilon}$, the same level of distortion $\epsilon$ for $d_{max}$. To translate it into machine learning parlance, we showed that the communication rate needed to provide some performance guarantees on the expected average system test error, 
and on the expected single-model performance, is the same, as long as sufficient common randomness is available. 
In both cases, the characterization is over the expected performance, thus for any one realization the $k$-th model might have higher than desired loss.

\subsection{Coding Without the Marginal $Q_H$}
\label{sub:kl_achievable}

We notice that all the previous achievability results assume knowledge of the exact marginal ${Q_H = \sum_{c \in \mathcal{C}, s \in \mathcal{Z}^m} Q_{H|s}P_{s|c}P_c}$ to be used as pre-data coding model distribution (see Section~\ref{sec:system}), which is usually not known and difficult to obtain. Consequently, we are interested in studying the minimum achievable rate, when a generic coding distribution $P_H$ is used to code $Q_{H|S}$.

\begin{theorem}[Achievability with General $P_H$]
\label{thm:rate_dist_kl}
For the problem of communicating concepts, the minimum achievable rate for both $d_{avg}$ and $d_{max}$ with pre-data coding distribution $P_H$ is
\begin{align}
\label{eq:rate_dist_kl}
    R(\epsilon) = \min_{\substack{\tilde{Q}_{H|S}: \\  d_{sem}(Q, \tilde{Q}) \leq \epsilon }} \mathbb{E}_{C, S} \left[ D_{\text{KL}} \left( \tilde{Q}_{H|S} \| P_H \right) \right],
\end{align}
assuming sufficient common randomness. 
\end{theorem}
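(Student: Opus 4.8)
The plan is to read Theorem~\ref{thm:rate_dist_kl} as a channel-simulation (reverse-Shannon-type) statement in which the codebook is forced to be generated from the fixed pre-data distribution $P_H$ rather than from the induced marginal $\tilde{Q}_H = \sum_{c,s} \tilde{Q}_{H|s} P_{s|c} P_c$. The bridge with Theorem~\ref{cor:oracle_rate_distortion} is the elementary identity
\begin{align}
\label{eq:klchain}
\mathbb{E}_{C,S}\!\left[ D_{\text{KL}}(\tilde{Q}_{H|S} \| P_H) \right] = I(S;H) + D_{\text{KL}}(\tilde{Q}_H \| P_H),
\end{align}
so the right-hand side of~(\ref{eq:rate_dist_kl}) is never below $I(S;H)$ and collapses to it exactly when $P_H = \tilde{Q}_H$, recovering Theorem~\ref{cor:oracle_rate_distortion}. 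I would therefore re-run the achievability behind Lemma~\ref{lem:achievable_max} and the converse behind Theorem~\ref{cor:oracle_rate_distortion}, replacing the optimal reference distribution by $P_H$ and carrying the extra divergence $D_{\text{KL}}(\tilde{Q}_H\|P_H)$ through both directions; note that since $d_{max}\le\epsilon$ implies $d_{avg}\le\epsilon$, it suffices to establish achievability for $d_{max}$ and the converse for $d_{avg}$.

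For achievability, fix any $\tilde{Q}_{H|S}$ with $d_{sem}(Q,\tilde{Q})\le\epsilon$. Let the common randomness encode an i.i.d.\ pool of models drawn from $P_H$, and, on a block of $n$ datasets $s^n$, let Alice apply the one-shot channel-simulation code of~\cite{pmlr-v162-theis22a, functional_repr} (equivalently the $\mathtt{MIRACLE}$-type construction of~\cite{havasi_2018}) to the product channel $\prod_{i=1}^n \tilde{Q}_{H_i|S_i}$ with reference $P_H^{\otimes n}$. This returns $\hat{H}^n$ with \emph{exactly} the conditional law $\prod_i \tilde{Q}_{H_i|S_i}$ -- hence strong coordination of $(s^n,\hat{h}^n)$ toward $\prod_i \tilde{Q}_{S_i,H_i}$ -- using a description length $K$ with $\mathbb{E}[K\mid S^n]\le \mathcal{R}_n + \log(\mathcal{R}_n+1)+4$, where $\mathcal{R}_n := \sum_i D_{\text{KL}}(\tilde{Q}_{H_i|S_i}\|P_H)$. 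Since the summands are i.i.d.\ with mean $\bar{R} := \mathbb{E}_{C,S}[D_{\text{KL}}(\tilde{Q}_{H|S}\|P_H)]$, the normalized description length concentrates on $\bar{R}$ (law of large numbers, with one more layer of blocking if needed), so truncating to a fixed-length code of rate $\bar{R}+\delta$ has vanishing overflow probability, and because $\ell\in[0,1]$ the distortion contributed by that event vanishes; hence $d_{avg}(Q^n,\hat{Q}^n)\to d_{sem}(Q,\tilde{Q})\le\epsilon$. Strong coordination makes each marginal $\hat{Q}_{S_i,\hat{H}_i}$ converge in total variation to $\tilde{Q}_{S,H}$, so by (the argument of) Lemma~\ref{lem:achievable_max} the same rate also gives $d_{max}(Q^n,\hat{Q}^n)\to d_{sem}(Q,\tilde{Q})\le\epsilon$. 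Optimizing over $\tilde{Q}$ yields the ``$\le$'' half of~(\ref{eq:rate_dist_kl}).

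For the converse (within the class of schemes whose decoder outputs codewords distributed as $P_H$), from any sequence of $(2^{nR},2^{nR_0},n)$ codes achieving $d_{avg}\le\epsilon$ I would extract the induced single-letter conditional $\hat{Q}_{H|S}$ through a uniform time-sharing variable, argue by continuity and boundedness of the loss (and linearity of $\mathcal{L}_C$ in the belief) that it is asymptotically feasible, $d_{sem}(Q,\hat{Q})\le\epsilon+o(1)$, and then lower-bound the rate by $\tfrac1n\sum_i D_{\text{KL}}(\hat{Q}_{H_i|S_i}\|P_H) - o(1)$. By~(\ref{eq:klchain}), the $I(S;H)$ component of this bound is precisely the (empirical) coordination converse already used for Theorem~\ref{cor:oracle_rate_distortion} -- which common randomness cannot reduce -- while the additional $D_{\text{KL}}(\hat{Q}_H\|P_H)$ component must come from a soft-covering/resolvability estimate exploiting the fact that Bob's reconstruction is a codeword drawn from a $P_H$-generated list: covering the $\hat{Q}_H$-typical set with such codewords needs rate at least $D_{\text{KL}}(\hat{Q}_H\|P_H)$. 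Convexity of $D_{\text{KL}}(\cdot\|P_H)$ in the conditional then single-letterizes the bound to the minimum in~(\ref{eq:rate_dist_kl}).

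I expect the converse to be the main obstacle: a naive mutual-information argument does not suffice because, with sufficient common randomness, $I(S;\hat{H}\mid\omega)$ can be driven to zero, so the $D_{\text{KL}}(\tilde{Q}_H\|P_H)$ penalty is recovered only by fully using the constraint that the reconstruction is codeword-distributed as $P_H$, and by combining the coordination lower bound with the covering/resolvability lower bound while keeping track of the common-randomness budget $R_0$. On the achievability side the only delicate point is the variable-to-fixed-length conversion, which is handled by the concentration of the normalized description length together with the boundedness of the loss.
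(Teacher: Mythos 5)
Your achievability argument is essentially the paper's: both reduce to the one-shot channel-simulation bound of Corollary 3.4 in~\cite{pmlr-v162-theis22a}, apply it to the $n$-fold product channel $\prod_i \tilde{Q}_{H_i|S_i}$ against the reference $P_H^{\otimes n}$, observe that the expected KL cost factorizes into $n\,\mathcal{R}(\tilde Q,P)$, and divide by $n$. You add two refinements the paper's Appendix~C leaves implicit: the variable-to-fixed-length conversion via concentration of the per-block description length plus a truncation whose contribution vanishes because $\ell\in[0,1]$, and the explicit invocation of strong coordination to transfer the $d_{avg}$ guarantee to $d_{max}$ via Lemma~\ref{lem:achievable_max}. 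These are worthwhile additions and do not change the route. The identity $\mathbb{E}_{C,S}[D_{\text{KL}}(\tilde Q_{H|S}\|P_H)] = I(S;H) + D_{\text{KL}}(\tilde Q_H\|P_H)$ you use as a bridge to Theorem~\ref{cor:oracle_rate_distortion} is also the exact penalty the paper records in the remark after the theorem.

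Where you genuinely diverge is the converse. The paper does not prove one: Appendix~C establishes only that rate $\mathcal{R}(\tilde Q,P)$ is \emph{achievable}, and relies implicitly on the one-shot lower bound $\mathcal{R}(\tilde Q,P)\le\mathbb{E}[K]$ from~\cite{pmlr-v162-theis22a}, which is only a converse for the specific $P_H$-driven channel-simulation construction rather than over all $(2^{nR},2^{nR_0},n)$ codes. You correctly flag this gap: a plain mutual-information bound does not work once common randomness is allowed, and you need a resolvability/soft-covering argument to force the $D_{\text{KL}}(\hat Q_H\|P_H)$ overhead from the requirement that Bob's output be $P_H$-codeword distributed. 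Your sketch of that argument is plausible but not complete -- in particular the single-letterization of the per-coordinate KL under the time-sharing variable, and the interaction between the $R_0$ budget and the soft-covering rate, would need to be carried out carefully, and the restriction to ``decoders that output $P_H$-distributed codewords'' would need a precise formulation to make the claimed optimality meaningful. So: same achievability route, plus a converse attempt that goes beyond the paper but is not yet a proof.
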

\begin{proof}
See Appendix~\ref{app:rate_dist_kl}.
\end{proof}

It is known that when $P_H = Q_H$, i.e., using the marginal, Equation~(\ref{eq:rate_dist_kl}) is minimized, and so when the marginal is not known we pay an additional penalty given by $\mathbb{E}_{C, S} \left[ D_{\text{KL}} \left( Q_{H|S} \| P_H \right) \right] - I(S; H) = \mathbb{E}_{C, S} \left[ D_{\text{KL}} \left( Q_{H} \| P_H \right) \right]$.

\section{Communicating the Data vs Communicating the Model}
 \label{sec:model_data_comm}

To motivate our research problem, we comment on the advantages for Alice of first compressing a trained model, and then sending it to Bob (\emph{scheme 1}), versus a second framework (\emph{scheme 2}), in which Alice communicates a compressed version of the dataset $\hat{S}^2 = \rho(S)$, using which Bob trains his models. Given a quantity $X$, we indicate with $X^i$ the same quantity in the $i$-th scheme. First of all, we see that in scheme 1, the corresponding Markov chain is $C \rightarrow S \xrightarrow{R} \hat{H}^1$, where the $R$ above the arrow indicates the information bottleneck between the two random variables. However, the same chain for the second scheme reads $C \rightarrow S \xrightarrow{R} \hat{S}^2 \rightarrow \hat{H}^2$. By the data processing inequality~\cite{cover:IT}, the rate constraint imposes $I(S; \hat{H}) \leq R$ in both cases, limiting the set of all feasible beliefs $Q \in \Phi(\hat{\mathcal{H}})$. 

Now, we reasonably assume that the optimal solution $Q^*(S, H)$ constrained to $I(S; H) \leq R$ lies on the boundary of the constraint, i.e., $I_{Q^*}(S; H) = R$, where $I_{Q^*}(S; H)$ indicates that the mutual information is computed using $Q^*$. In this case we can see that for scheme 1, $I(S; \hat{H}^1) = R$, whereas for scheme 2, $I(S; (\hat{S}^2, \hat{H}^2)) = R$. Indeed, in the former scheme Alice conveys 
 just the random variable $\hat{H}^1$, i.e., the model, to Bob, whereas in the latter, the pair $(\hat{S}^2, \hat{H}^2)$ is communicated. However, the information bottleneck is the same. Consequently, we obtain
\begin{align}
    I(S; \hat{H}^1) = I(S;\hat{H}^2) + I(S; \hat{S}^2 | \hat{H}^2).
\end{align}
By non-negativity of the mutual information, we always have $I(S; \hat{H}^1) \geq I(S;\hat{H}^2)$, meaning that the rate constraint on the communication channel translates differently into a model constraint for the two schemes, being stricter for the second one. In particular, the gap between the two schemes is exactly $I(S; \hat{S}^2 | \hat{H}^2)$. Consequently, for scheme 2 the optimal compression function $\rho(S)$ must achieve $I(S; \hat{S}^2 | \hat{H}^2) = 0$, and so the only way to match the performance of scheme 1 is to account for the optimal distribution $Q_{\hat{H}|S}$ when computing $\hat{S}$. 
\subsection{Distortion Rate Bound}
\label{sub:distortion_rate_bound}

In this section, we bound the distortion-rate function for $d_{max}$, which is useful to translate the rate constraint into a performance gap.
We now define $\Delta_R = R^* - R$ to be the difference between the rate $R^*$ of the optimal distribution and the rate $R$ of the channel imposed by the problem. In general, we assume $R^* \geq R$ and $\mathbb{E}_{C, S} \left[\mathcal{L}(Q^*) \right] \leq \mathbb{E}_{C, S} \left[\mathcal{L}(Q) \right]$, where $Q^*$ is achievable with rate $R^*$, and $Q$ with rate $R$.

\begin{lemma}
\label{lem:dist_rate_bound}
Assuming that $\ell_C(z, h)$ is upper bounded by $L_{\text{max}} \; \forall h \in \mathcal{H}, \; \forall z \in \mathcal{Z}, \; \forall C \in \mathcal{C}$, and that distortion $d_{max}$ is considered, the distortion-rate function for the problem of communicating concepts when using scheme 1 can be upper bounded by
\begin{align}
\label{eq:dist_rate_bound}
    \epsilon^1(\Delta_R) \leq L_{\text{max}} \cdot \min \Big\{\sqrt{\frac{1}{2} \Delta_R},  \sqrt{1 - e^{-\Delta_R}} \Big\},
\end{align}
and when using scheme 2 by
\begin{align}
\label{eq:dist_rate_bound_2}
\begin{split}
    \epsilon^2(\Delta_R) \leq L_{\text{max}} \cdot \min \Big\{ & \sqrt{\frac{1}{2} \left( \Delta_R +I(S;\hat{S}^2|\hat{H}^2)\right)}, \\
    & \sqrt{1 - e^{- \left( \Delta_R +I(S;\hat{S}^2|\hat{H}^2) \right)}} \Big\}.
\end{split}
\end{align}
\end{lemma}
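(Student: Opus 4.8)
The plan is to recast the claim as a single-letter bound on the semantic distortion $d_{sem}(Q^{*},\tilde Q)$ of a carefully chosen rate-$R$ conditional $\tilde Q_{H|S}$, and then to control that quantity through total-variation/divergence inequalities. Since the statement concerns $d_{max}$, I would first invoke Lemma~\ref{lem:achievable_max}: with sufficient common randomness the rate-distortion region for $d_{max}$ equals the one for $d_{avg}$, so it suffices to exhibit one rate-$R$ achievable scheme and bound its $d_{avg}$. I would have Alice and Bob agree on a single conditional $\tilde Q_{H|S}$ and realize it i.i.d.\ along the block via strong coordination; because the source is i.i.d., every per-symbol distortion then equals $d_{sem}(Q^{*},\tilde Q)$, so $d_{avg}$ and $d_{max}$ both equal this number and the problem reduces to upper bounding $d_{sem}(Q^{*},\tilde Q)$.

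Second, I would turn the loss gap into a divergence. Since $0\le\ell_{C}(h,z)\le L_{\max}$, for each fixed $(c,s,z)$ the zero-mass signed measure $\tilde Q_{H|s}-Q^{*}_{H|s}$ pairs with $\ell_{c}(\cdot,z)$ to at most $L_{\max}\,\mathrm{TV}(\tilde Q_{H|s},Q^{*}_{H|s})$ (subtract the midpoint constant $L_{\max}/2$ before bounding), so after taking $\mathbb{E}_{Z\sim C}$ and $\mathbb{E}_{C,S}$ one gets $d_{sem}(Q^{*},\tilde Q)\le L_{\max}\,\mathbb{E}_{S}\!\left[\mathrm{TV}(\tilde Q_{H|S},Q^{*}_{H|S})\right]$. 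Pinsker's and the Bretagnolle--Huber inequalities then give $\mathrm{TV}(\tilde Q_{H|S},Q^{*}_{H|S})\le\min\!\left\{\sqrt{\tfrac{1}{2}D_{S}},\,\sqrt{1-e^{-D_{S}}}\right\}$ with $D_{S}=D_{\mathrm{KL}}(Q^{*}_{H|S}\|\tilde Q_{H|S})$, and pushing $\mathbb{E}_{S}$ inside by Jensen (both $x\mapsto\sqrt{x/2}$ and $x\mapsto\sqrt{1-e^{-x}}$ are concave and increasing) yields $d_{sem}(Q^{*},\tilde Q)\le L_{\max}\min\!\left\{\sqrt{\tfrac{1}{2}\bar D},\,\sqrt{1-e^{-\bar D}}\right\}$, $\bar D=\mathbb{E}_{S}[D_{\mathrm{KL}}(Q^{*}_{H|S}\|\tilde Q_{H|S})]$. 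The task thus reduces to producing a rate-$R$ achievable $\tilde Q$ with $\bar D\le\Delta_{R}$.

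Third, and this is the main obstacle, I need a \emph{specific} rate-$R$ conditional that turns the $\Delta_{R}$ lost nats of rate into at most $\Delta_{R}$ nats of divergence from $Q^{*}$ -- a generic rate-$R$ distribution need not resemble $Q^{*}$ at all, and the naive arithmetic mixture $(1-\lambda)Q^{*}_{H|S}+\lambda P_{H}$ fails, since its rate scales like $(1-\lambda)R^{*}$ while its divergence from $Q^{*}$ is only $\le-\log(1-\lambda)$, which do not assemble into the clean tradeoff. My candidate is instead the \emph{geometric} interpolation $\tilde Q^{(\theta)}_{H|S}(h)\propto Q^{*}_{H|S}(h)^{\theta}\,P_{H}(h)^{1-\theta}$, $\theta\in[0,1]$, with $P_{H}$ the pre-data distribution conveying $Q^{*}$ at rate $R^{*}=\mathbb{E}_{S}[D_{\mathrm{KL}}(Q^{*}_{H|S}\|P_{H})]$ (Theorem~\ref{thm:rate_dist_kl}). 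Writing $\psi_{S}(\theta)=\log\sum_{h}Q^{*}_{H|S}(h)^{\theta}P_{H}(h)^{1-\theta}$ (a convex cumulant-generating function in $\theta$, with $\psi_{S}(1)=0$ and $\psi_{S}'(1)=D_{\mathrm{KL}}(Q^{*}_{H|S}\|P_{H})$), a short computation gives $\mathrm{rate}(\theta)=\mathbb{E}_{S}[\theta\psi_{S}'(\theta)-\psi_{S}(\theta)]$ and $\bar D(\theta)=(1-\theta)R^{*}+\mathbb{E}_{S}[\psi_{S}(\theta)]$, so $\mathrm{rate}(\theta)+\bar D(\theta)=\theta\,\mathbb{E}_{S}[\psi_{S}'(\theta)]+(1-\theta)R^{*}\le R^{*}$ by $\psi_{S}'(\theta)\le\psi_{S}'(1)$. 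Since $\mathrm{rate}(\cdot)$ is continuous, non-decreasing, and reaches $R^{*}$ at $\theta=1$, I can pick $\theta$ with $\mathrm{rate}(\theta)=R$, which forces $\bar D(\theta)\le R^{*}-R=\Delta_{R}$; with step~2 and the monotonicity of the two bounding functions this gives \eqref{eq:dist_rate_bound}. I expect the real effort to lie exactly in getting this convexity identity right and in not falling into the arithmetic-mixture trap.

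Finally, \eqref{eq:dist_rate_bound_2} should follow from scheme~1 by the accounting already done in Section~\ref{sec:model_data_comm}: since $I(S;\hat H^{1})=I(S;\hat H^{2})+I(S;\hat S^{2}|\hat H^{2})$, under a rate-$R$ channel the reconstructed model in scheme~2 effectively sees only $R-I(S;\hat S^{2}|\hat H^{2})$ nats, i.e.\ its rate deficit is $\Delta_{R}+I(S;\hat S^{2}|\hat H^{2})$; re-running steps~1--3 with $\Delta_{R}$ replaced by $\Delta_{R}+I(S;\hat S^{2}|\hat H^{2})$ gives the stated bound. Beyond step~3 the remaining points are bookkeeping: keeping the total-variation normalization consistent so Pinsker's $\tfrac{1}{2}$ matches the $\tfrac{1}{2}$ in the statement, and checking that the single-conditional i.i.d.\ code is of the form covered by Lemma~\ref{lem:achievable_max}.
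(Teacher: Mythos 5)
Your proposal is correct but takes a genuinely different route for the key step. You and the paper agree on the first two moves: reduce $d_{max}$ to per-symbol $d_{sem}$ via strong coordination (Lemma~\ref{lem:achievable_max}), then bound $d_{sem}(Q,\hat Q)\le L_{\max}\,\mathrm{TV}(\hat Q,Q)$ and push through Pinsker, Bretagnolle--Huber, and Jensen to reach $L_{\max}\min\{\sqrt{\bar D/2},\sqrt{1-e^{-\bar D}}\}$ with $\bar D=\mathbb{E}_{C,S}[D_{\mathrm{KL}}]$. Where you diverge is in showing $\bar D\le\Delta_R$. You \emph{construct} an explicit rate-$R$ conditional, the geometric interpolation $\tilde Q^{(\theta)}\propto (Q^*)^{\theta}P^{1-\theta}$, and prove via the cumulant generating function $\psi_S(\theta)$ that $\mathrm{rate}(\theta)+\bar D(\theta)\le R^{*}$, so tuning $\theta$ to hit rate $R$ yields $\bar D\le\Delta_R$; the achievability then upper-bounds the distortion-rate function. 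The paper instead takes the \emph{optimal} rate-constrained $\hat Q$ and invokes the Pythagorean theorem for KL divergence (Cover, Thm.~11.6.1) on the convex constraint set $\{\hat Q: d_{sem}(Q,\hat Q)\le\epsilon\}$ to get $\mathbb{E}[D_{\mathrm{KL}}(Q\|\hat Q)]\le\mathbb{E}[D_{\mathrm{KL}}(Q\|P)]-\mathbb{E}[D_{\mathrm{KL}}(\hat Q\|P)]=\Delta_R$. Both are valid; your constructive route requires a bit more computation but is entirely self-contained, and it sidesteps a subtlety the paper glosses over, namely that the I-projection / Pythagorean argument is really being applied to an expectation over $(C,S)$ of per-conditional divergences rather than to a single pair of distributions. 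Your observation that the arithmetic mixture $(1-\lambda)Q^*+\lambda P$ does not assemble into the clean $\Delta_R$ tradeoff is correct and worth keeping. The scheme-2 corollary via $R=I(S;\hat H^2)+I(S;\hat S^2|\hat H^2)$ matches the paper's handling exactly.
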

\begin{proof}
    See Appendix~\ref{app:dist_rate_bound}.
\end{proof}

\section{Conclusion}
\label{sec:conclusion}

We introduced the problem of conveying concepts, where concepts naturally appear as sequences of samples and can be approximated by learnable models, as in standard statistical learning. We study the framework by applying information-theoretic tools to the problem of communicating many models jointly. We characterized its rate-distortion function for two different notions of system-level distortion, provided a bound for the distortion-rate function, and argued why jointly learning, compressing, and communicating models should be preferred over compressing and sending the datasets. For future investigations, the plan is to study in detail the relations between model compression and test accuracy, and to design practical model communications schemes. 


\newpage

\bibliographystyle{IEEEtran}
\bibliography{biblio}


\newpage
\hphantom{1em}
\newpage 

\appendices
\section{Proof of Theorem~\ref{cor:oracle_rate_distortion}}
\label{app:kramer_translate}

To prove  Theorem~\ref{cor:oracle_rate_distortion}, we need to show that we can translate the average distortion $d_{avg}$ requirement into a constraint on the empirical distribution $\hat{Q}_{s^n\hat{h}^n}(d, h) = \frac{1}{n} \sum_{i=1}^n \mathbf{1}_{(s_i, h_i) = (s, h)}$, where $\mathbf{1}_{\mathcal{I}} = 1$ if the condition $\mathcal{I}$ is true, and $0$ otherwise. We can write 
\begin{align*}
    d_{avg} & (Q^n, \hat{Q}^n) = \frac{1}{n} \sum_{i=1}^n d(Q_i, \hat{Q}_i)\\
    & = \mathbb{E}_{C, S} \left[ \frac{1}{n} \sum_{i=1}^n \mathcal{L}_{C}(\hat{Q}_i) - \mathcal{L}_{C}(Q_i)  \right] \\
    & \stackrel{(a)}{=} \mathbb{E}_{C, S} \left[ \frac{1}{n} \sum_{i=1}^n \sum_{h \in \mathcal{H}} \mathcal{L}_{C}(h) (\mathbf{1}_{(S, \hat{h}_i) = (S, h)} - Q_{h|S}) \right] \\
    & \stackrel{(b)}{=} \mathbb{E}_{C, S} \left[  \sum_{h \in \mathcal{H}} \mathcal{L}_{C}(h) \left(\frac{1}{n} \sum_{i=1}^{n}  \mathbf{1}_{(S, \hat{h}_i) = (S, h)} - Q_{h|S}\right)  \right] \\
    & = \mathbb{E}_{C, S} \left[  \sum_{h \in \mathcal{H}} \mathcal{L}_{C}(h) \left(\hat{Q}_{s^{n'}\hat{h}^{n'}}(d, h) - Q_{h|S}\right)  \right] \\
        & = \mathbb{E}_{C, S} \left[  \mathcal{L}(\hat{Q}_{s^{n'}h^{n'}}) - \mathcal{L}(Q)  \right] \\
    & = d_{sem}(Q, \hat{Q}_{s^{n'}\hat{h}^{n'}}),
\end{align*}
where the indicator function in (a) depends on the chosen $(2^{nR}, n)$ coding scheme, and (b) comes from the fact that Alice's sampling scheme does not depend on the model index $i$. We then showed that the average distortion requirement translates into a distortion between the empirical distribution $\hat{Q}_{s^{n}\hat{h}^{n}}$, and the target $Q$. Given this, Theorem 1 in~\cite{CommDistribution} applies, and our Theorem~\ref{cor:oracle_rate_distortion} follows.

\section{Proof of Lemma~\ref{lem:achievable_max}}
\label{app:achievable_max}
\begin{proof}
    Theorem~\ref{cor:oracle_rate_distortion} provides the minimum achievable rate to satisfy $d_{avg} \leq \epsilon$, which is provided by an optimized distribution $Q_{S,H}^*$  that minimizes the mutual information $I(S;H)$, and satisfies $d_{avg} \leq \epsilon$. Now, we impose strong coordination between Alice and Bob, by using as target joint distribution $Q_{S^n,H^n}^* = \prod_{i=1}^n Q_{S,H}^*$, and by Theorem 10 of~\cite{cuff:2010} the same rate $I(S;H)$ of empirical coordination can be achieved, as long as enough common randomness is available. Now, given that Bob's distribution $\hat{Q}_{S^n, \hat{H}^n}$ converges in total variation to $Q_{S^n,H^n}^*$, so it happens for the marginal, i.e., $\hat{Q}_{S^n, \hat{H}^n}(s_i, \hat{h}_i) \xrightarrow{\text{TV}} Q_{S,H}^*$. However, by construction, $Q_{S,H}^*$ satisfies the distortion constraint symbol-wise, and so satisfies $d_{max}$.
\end{proof}

\section{Proof of Theorem~\ref{thm:rate_dist_kl}}
\label{app:rate_dist_kl}
 To prove Theorem~\ref{thm:rate_dist_kl}, we observe that Alice now needs to convey her probability distribution $Q^n_{H^n | S^n}$ to Bob using $P^n_H = (P_{H^n})^n$ to code it. If we indicate with $\mathbb{E}_{C,S}\left[ K \right]$ the average number of bits spent to convey model belief $Q$ using distribution $P$, Corollary 3.4 in~\cite{pmlr-v162-theis22a} provides the single-shot bounds
\begin{align*}
    \mathcal{R}(Q, P) \leq \mathbb{E}_{C,S}\left[ K \right] \leq  \mathcal{R}(Q, P) +
    \log \left( \mathcal{R}(Q, P) + 1 \right) + 4,
\end{align*}
where $\mathcal{R}(Q, P) = \mathbb{E}_{C,S} \left[ D_{\text{KL}}(Q || P)\right]$. 

We now translate the results to the $n$-length sequence, and compute its limit as $n$ grows indefinitely.
\begin{align*}
    \mathcal{R}(Q^n, P^n) & = \mathbb{E}_{C^n,S^n} \left[ D_{\text{KL}}(Q^n || P^n)\right] \\
    & = \mathbb{E}_{C^n} \left[ \mathbb{E}_{S^n|C^n} \left[ \mathbb{E}_{H^n|S^n} \log \frac{Q^n_{H^n|S^n}}{P^n_{H^n}}\right]\right] \\
    & \stackrel{(a)}{=}  \mathbb{E}_{C^n} \left[ \mathbb{E}_{S^n|C^n} \left[ \mathbb{E}_{H^n|S^n} \log \frac{\prod_{i=1}^n Q_{H|S_i}}{(P_{H})^n}\right]\right] \\
    & \stackrel{(b)}{=}  \mathbb{E}_{C^n} \left[ \mathbb{E}_{S^n|C^n} \left[ \mathbb{E}_{H^n|S^n} \log \frac{(Q_{H|S})^n}{(P_{H})^n}\right]\right] \\
    & = n \mathcal{R}(Q, P),
\end{align*}
where (a) is because samples are independent given the nature of the problem, and (b) is because, given the dataset realization, they are also identically distributed by our assumption. Consequently, we can upper bound the total number of bits with
\begin{align*}
 \mathbb{E}_{C,S}\left[ K \right] \leq n \mathcal{R}(Q, P) +
    \log \left( n \mathcal{R}(Q, P) + 1 \right) + 4,
\end{align*}
obtaining an average rate of
\begin{align*}
    \lim_{n \rightarrow \infty} \frac{\mathbb{E}_{C,S}\left[ K \right]}{n} \rightarrow  \mathcal{R}(Q, P).    
\end{align*}

\section{Proof of Lemma~\ref{lem:dist_rate_bound}}
\label{app:dist_rate_bound}

We now prove the result in Lemma~\ref{lem:dist_rate_bound}. Let $Q$ be the distribution at the sender, and $\hat{Q}$ the one at the receiver, i.e., the distribution $\hat{Q} = Q^*_{S, H}$ in the proof of Lemma~\ref{lem:achievable_max}. We first bound, for each $C \in \mathcal{C}$, $d_{max}$ by noticing that, for each $i \in \{1, \ldots, n\}$, the following holds
\begin{equation*}
\begin{split}
    & d_{sem} (Q^n_i, \hat{Q}^n_i) = \\
    & \stackrel{(a)}{=} \mathbb{E}_{C, S} \left[ \mathcal{L}_{C}(\hat{Q}) - \mathcal{L}_{C}(Q)  \right]\\
    & = \mathbb{E}_{C,S} \left[ \sum_{h \in \mathcal{H}} \mathbb{E}_{z \sim C} \left[ \ell(h, z)\right] \left(\hat{Q}_{\hat{h}|S} - Q_{h|S}\right)  \right] \\
    & \leq L_{\text{max}} \cdot \mathbb{E}_{C,S} \left[ \sum_{h \in \mathcal{H}} \hat{Q}_{\hat{h}|S} - Q_{h|S}\right]  \\
    & \leq L_{\text{max}} \cdot \mathbb{E}_{C,S} \left[ \text{TV}(\hat{Q}, Q) \right] \\
    &\stackrel{(b)}{\leq} L_{\text{max}} \cdot \mathbb{E}_{C,S} \Big[\min \Big\{  \sqrt{\frac{1}{2} 
    D_{\text{KL}}(\hat{Q} \| Q)}, 
     \sqrt{1 - e^{- D_{\text{KL}}(\hat{Q} \| Q)}} \Big\} \Big] \\
    & \stackrel{(c)}{\leq} L_{\text{max}} \cdot \min \Big\{ \sqrt{\frac{1}{2} \mathbb{E}_{C,S} \left[ D_{\text{KL}}(\hat{Q} \| Q)\right]}, 
    \sqrt{1 - e^{- \mathbb{E}_{C,S} \left[ D_{\text{KL}}(\hat{Q} \| Q)\right]}} \Big\}
\end{split}
\end{equation*}
where (a) is given by strong coordination, (b) is by the combination of the inequalities due to Pinsker and Breatgnolle-Huber, and (c) is Jensen's inequality. We now proceed to bound $ \mathbb{E}_{C,S} \left[ D_{\text{KL}}(Q \| \hat{Q})\right]$ by observing that the average distortion between $Q$ and $\hat{Q}$ is a linear function of the probabilities $\hat{Q}$, and the set $\hat{\mathcal{Q}} = \{ \hat{Q} \in \Phi(\mathcal{H}) : d_{sem}(Q, \hat{Q}) \leq \epsilon\}$ satisfying the constraint is convex, and that by the definition of Distortion-Rate function, $\hat{Q}$ is the distribution in $\hat{\mathcal{Q}}$ minimizing $\mathbb{E}_{C, S} \left[ D_{\text{KL}}(\hat{Q} \| P) \right]$, where $P$ is the pre-data coding distribution (see Theorem~\ref{thm:rate_dist_kl}). Then, Theorem 11.6.1 in ~\cite{cover:IT}, also known as the Pythagorean Theorem for the Kullback-Leibler divergence, applies, and we can bound
\begin{align*}
    \mathbb{E}_{C, S} \left[ D_{\text{KL}}(Q \| \hat{Q}) \right] & \leq \mathbb{E}_{C, S} \left[ D_{\text{KL}}(Q \| P) \right] - \mathbb{E}_{C, S} \left[ D_{\text{KL}}(\hat{Q} \| P) \right]  \\
    & \stackrel{(a)}{=} R^* - R \\
    & = \Delta_R,
\end{align*}
where (a) is given by Theorem~\ref{thm:rate_dist_kl}. Combined together, the two inequalities provide Equation~(\ref{eq:dist_rate_bound}). Regarding Scheme 2, it is sufficient to notice that $R = I(S;\hat{H}^2) + I(S; \hat{S}^2 | \hat{H}^2)$ (see Section~\ref{sec:model_data_comm}), from which Equation~(\ref{eq:dist_rate_bound_2}) follows.

We notice that this does not directly apply to $d_{avg}$, as the distribution $Q^*$ solving Equation~(\ref{eq:rate_dist_avg}) and providing the rate is not, in general, the one used by Bob to sample actions. This is true for the scheme for $d_{max}$, in which common randomness is introduced.

\end{document}